\newtheorem{thm}{Theorem}\crefname{thm}{Theorem}{Theorems}
\newtheorem{lem}[thm]{Lemma}\crefname{lem}{Lemma}{Lemmas}
\crefname{cor}{Corollary}{Corollaries}
\DeclareMathOperator{\tr}{tr}
\DeclareMathOperator{\rk}{rk}
\DeclareMathOperator{\GHZ}{GHZ}
\DeclarePairedDelimiter{\abs}{\lvert}{\rvert}
\newcommand{\EE}{\mathbb E}
\newcommand{\PP}{\mathbb P}
\newcommand{\CC}{\mathbb C}
\newcommand{\RR}{\mathbb R}
\newcommand{\ZZ}{\mathbb Z}
\newcommand{\NN}{\mathbb N}
\newcommand{\cH}{\mathcal H}
\newcommand{\ot}{\otimes}
\newcommand{\eps}{\varepsilon}
\newcommand{\Chyper}[1]{C^H_{#1}}
\newcommand{\Cgraph}[1]{C^G_{#1}}
\newcommand{\Cquantum}[1]{C^Q_{#1}}
\newcommand{\Cstab}[1]{C^S_{#1}}
\begin{document}

\title{Hypergraph min-cuts from quantum entropies}
\author{Michael Walter}
\email{m.walter@uva.nl}
\affiliation{Korteweg-de Vries Institute for Mathematics and QuSoft, University of Amsterdam, the Netherlands}
\affiliation{Institute for Theoretical Physics, Institute for Language, Logic, and Computation, University of Amsterdam, the Netherlands}
\author{Freek Witteveen}
\email{f.g.witteveen@uva.nl}
\affiliation{Korteweg-de Vries Institute for Mathematics and QuSoft, University of Amsterdam, the Netherlands}
\date{\today}

\begin{abstract}
The von Neumann entropy of pure quantum states and the min-cut function of weighted hypergraphs are both symmetric submodular functions.
In this article, we explain this coincidence by proving that the min-cut function of any weighted hypergraph can be approximated (up to an overall rescaling) by the entropies of quantum states known as stabilizer states.
We do so by constructing a novel ensemble of random quantum states, built from tensor networks, whose entanglement structure is determined by a given hypergraph.
This implies that the min-cuts of hypergraphs are constrained by quantum entropy inequalities, and it follows that the recently defined hypergraph cones are contained in the quantum stabilizer entropy cones, which confirms a conjecture made in the recent literature.
\end{abstract}

\maketitle

\section{Quantum states and hypergraphs}
Given a quantum state~$\rho$ on a finite-dimensional tensor product Hilbert space $\cH = \bigotimes_{t\in T} \cH_t$, where $T:=[n]:=\{1,\dots,n\}$ is the set of subsystem labels, its \emph{entropy function} $S\colon 2^T\to\RR_{\geq0}$ assigns to each subset $A\subseteq T$ the von Neumann entropy $S(A) = -\tr[\rho_A \log \rho_A]$ of the reduced state~$\rho_A = \tr_{T\setminus A}[\rho]$.
A basic property of the entropy function is that if $\rho$ is pure, the entropy function is \emph{symmetric}, i.e., $S(A) = S(T \setminus A)$ for all $A \subseteq T$.
By a celebrated theorem of Lieb-Ruskai, the entropy function is also \emph{submodular}, meaning that
\begin{align*}
  S(A) + S(B) \geq S(A \cup B) + S(A \cap B) \qquad \forall A, B \subseteq T.
\end{align*}
This inequality is also known as the strong subadditivity of the von Neumann entropy.
It is interesting to ask if there are any other entropy inequalities satisfied by arbitrary quantum states.
To formalize this, one can define the \emph{quantum entropy cone}~$\Cquantum n \subseteq \RR_{\geq0}^{2^n}$ as the closure of the set of entropy functions obtained by varying over all pure quantum states on finite-dimensional tensor product Hilbert spaces as above.
The result is a closed convex cone.\cite{pippenger2003inequalities}
It is a well-known open problem in quantum information theory to determine the cones~$\Cquantum n$ for~$n \geq 5$.\cite{pippenger2003inequalities,linden2005new}

One strategy to make progress on this difficult problem has been to restrict
to certain classes of quantum states, in order to obtain inequalities that only hold for these specific classes of quantum states.
One example is the class of stabilizer states, which are a versatile family of quantum states that have many applications in quantum information theory.\cite{gottesman1997stabilizer,gross2017schur}
Thus, let $\Cstab n$ denote the closed convex cone generated by the set of entropy functions of pure stabilizer states (over any fixed prime).
In general, $\Cstab n \subseteq \Cquantum n$ is a proper subcone.\cite{linden2013quantum,gross2013stabilizer}
In other words, there are entropy inequalities which are valid for all stabilizer states, but may be violated by general quantum states.

Another class of states of interest are states arising from holographic space-times in the sense of the AdS/CFT correspondence.\cite{maldacena1999large}
In that case it is known that (up to small corrections) the entropy function has a geometric interpretation as the size of certain surfaces, as prescribed by the Ryu-Takayanagi formula.\cite{ryu2006holographic}
This has served as motivation to define the holographic entropy cone, where one considers the cone of entropies as computed by the Ryu-Takayanagi formula on some arbitrary space.\cite{bao2015holographic}
This cone has been used to prove various new entropy inequalities for holographic states\cite{bao2015holographic} and is a subject of intense current research in theoretical high energy physics.\cite{bao2015gapped,marolf2017handlebody,cuenca2019holographic,czech2019holographic,hubeny2019holographic}

The holographic entropy cone can be shown to be equal to the cone generated by min-cut functions on graphs.\cite{bao2015holographic}
We will now introduce a generalization, by considering \emph{hypergraphs}.
Given a hypergraph $G=(V,E)$ and non-negative hyperedge weights~$w\colon E\to\RR_{\geq0}$, the \emph{cut function} $c \colon 2^V\to\RR_{\geq0}$ is defined as $c(M) = \sum_{e\in\delta(M)} w(e)$, where $\delta(M)$ is the set of hyperedges that contain vertices both in~$M$ and~$V\setminus M$.
Fixing a subset of terminals~$T\subseteq V$, the \emph{min-cut function} $m\colon 2^T \to \RR_{\geq0}$ is then given by $m(A) = \min_{M : M \cap T = A} c(M)$.
If all hyperedges have cardinality two then the above reduce to the usual notions for weighted undirected graphs.
It is well-known that hypergraph cut functions are symmetric and submodular, just like the quantum entropy function.
This property extends directly to min-cut functions, so it holds that
\begin{align*}
  m(A) &= m(T \setminus A) \qquad \forall A \subseteq T, \\
  \quad m(A) + m(B) &\geq m(A \cup B) + m(A \cap B) \qquad \forall A, B \subseteq T.
\end{align*}
Similar as before we may define the \emph{hypergraph cone}~$\Chyper n \subseteq \RR_{\geq0}^{2^n}$ as the set of min-cut functions obtained from arbitrary weighted hypergraphs with terminals~$T=[n]$, which again is a closed convex cone.\cite{bao2020quantum}
The holographic entropy cone mentioned above can be defined in similar fashion, restricting to min-cut functions of graphs rather than general hypergraphs; we denote this cone by~$\Cgraph n \subseteq \RR_{\geq0}^{2^n}$.

The coincidence that both the min-cut function of hypergraphs and the entropy function of quantum states are symmetric and submodular begs the question if hypergraph min-cuts can always be realized by quantum entropies (or vice versa).
In other words, how are $\Cquantum n$ and $\Chyper n$ related?
Further motivation for studying this problem in the context of holography was given in Ref.~\onlinecite{bao2020quantum}.

Our main result is as follows:

\begin{thm}\label{thm:main}
  For any $n\in\NN$, we have that $\Chyper n \subseteq \Cstab n \subseteq \Cquantum n$.
\end{thm}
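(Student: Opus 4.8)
The plan is to prove the two inclusions separately. The inclusion $\Cstab n \subseteq \Cquantum n$ is immediate, since stabilizer states are in particular pure quantum states on finite-dimensional Hilbert spaces, so every stabilizer entropy function is a quantum entropy function, and the desired inclusion follows by taking closures of convex cones. The substance of the theorem is therefore the inclusion $\Chyper n \subseteq \Cstab n$, which I would establish by showing that the min-cut function of an \emph{arbitrary} weighted hypergraph on terminals $T = [n]$ can be approximated, up to an overall rescaling, by the entropy function of a suitable pure stabilizer state. Since both $\Chyper n$ and $\Cstab n$ are closed convex cones, it suffices to realize a dense set of hypergraph min-cut functions (say those with rational, hence after scaling integer, weights) arbitrarily well by rescaled stabilizer entropies.

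First I would reduce to the case of a single hyperedge. Because the cut function $c(S) = \sum_{e \in \delta(S)} w(e)$ is a non-negative weighted sum of the indicator cut functions of the individual hyperedges, and the min-cut operation is a minimization, the key structural fact to exploit is that $\Cstab n$ is a convex cone closed under the operations that mirror taking weighted sums and minimization over internal (non-terminal) vertices. Concretely, I would build, for each hyperedge $e$ carrying a bulk of $|e|$ terminal/internal legs, a stabilizer state whose entropy function reproduces (a multiple of) the elementary cut contribution of that hyperedge; the natural candidate is a GHZ-type stabilizer state $\GHZ$ shared among the vertices incident to $e$, whose entropies are manifestly of min-cut form. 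Taking tensor products of such per-edge states realizes the weighted sum $c(S)$ as an entropy function on the enlarged system $V$, with edge weights encoded by the number of copies (requiring integer weights, whence the rescaling and the density argument).

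Next I would implement the minimization $m(A) = \min_{S : S \cap T = A} c(S)$ over the placement of the non-terminal vertices. The entropy of the reduced state $\rho_A$ of the full stabilizer state on $\cH = \bigotimes_{t \in T}\cH_t$, after tracing out the internal vertices $V \setminus T$, is governed by the stabilizer formalism: the entropy of a subsystem of a pure stabilizer state equals (a multiple of) the rank of a certain submatrix of the stabilizer's check matrix over the finite field, and this rank is exactly computed by a combinatorial min-cut of the associated structure. I would therefore argue that for GHZ-product states the entropy function of the terminals $S(A)$ equals the min-cut $m(A)$ obtained by optimizing over which side each internal vertex falls on, matching the definition $m(A) = \min_{S : S \cap T = A} c(S)$. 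This is the standard graph mechanism of~\cite{hayden2016holographic,nezami2016multipartite}, which I would adapt from ordinary edges (two endpoints) to hyperedges (arbitrarily many endpoints) by replacing Bell/EPR pairs with multipartite GHZ stabilizer states.

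The main obstacle I anticipate is the hyperedge case of this last step: for an ordinary graph, a weighted edge is realized by EPR pairs whose entropy contribution is a clean indicator, and the min-cut-equals-entropy correspondence is classical; for a hyperedge spanning three or more vertices, one must verify that the GHZ-type stabilizer state contributes precisely $w(e)$ to the cut whenever the hyperedge is split (i.e.\ its vertices are not all on one side) and $0$ otherwise, and that tracing out internal vertices correctly induces the minimization rather than some larger or smaller quantity. Establishing this requires a careful entropy computation for reduced GHZ states and an argument that the stabilizer rank formula, evaluated on the terminal subsystem, coincides with the hypergraph min-cut; controlling the approximation error under the integer-weight/rescaling reduction and confirming that the limit stays inside the closed cone $\Cstab n$ is the remaining technical point.
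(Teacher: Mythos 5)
Your first inclusion and your construction of the per-edge states agree with the paper: $\Cstab n \subseteq \Cquantum n$ is indeed immediate, and the paper likewise builds $\ket\Omega = \bigotimes_{e\in E} \ket{\GHZ(e)}^{\ot w(e)}$ with integer weights encoded as copies, using closedness of the cone to handle rescaling and approximation. But there is a genuine gap in your mechanism for implementing the minimization: you propose to \emph{trace out} the internal vertices $V\setminus T$ and claim the resulting entropies equal the min-cut. This is false. Tracing out $V \setminus T$ computes the entropies of $\Omega$ itself, which by the cut formula are $S(\Omega_A) = \log(D)\, c(A)$, i.e.\ the cut in which every internal vertex is forced onto the side \emph{opposite} to $A$ --- no minimization over placements of internal vertices occurs. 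A minimal counterexample: terminals $t_1,t_2$, one internal vertex $v$, a single edge $\{t_1,v\}$ of weight $1$. Then $m(\{t_1\})=0$ (take $S=\{t_1,v\}$), but tracing out $v$ from the EPR pair leaves $\rho_{t_1}$ maximally mixed with entropy $\log D$. Worse, the traced-out state on the terminals is mixed, so its entropy function is not even symmetric ($S(\{t_1\})=\log D \neq 0 = S(\{t_2\})$ here), whereas min-cut functions and pure-state entropy functions are; so this construction cannot land in the cone generated by \emph{pure} stabilizer states at all.

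The missing idea is that the internal vertices must be \emph{projected onto pure states}, not traced out, so that $\Psi = \bigl(\bigotimes_{x\in V\setminus T}\bra{\phi_x}\bigr)\ket\Omega$ remains pure on the terminals; and moreover the $\phi_x$ must be chosen \emph{at random} (from a projective 2-design, e.g.\ uniformly random stabilizer states, which keeps $\Psi$ a stabilizer state). The projection gives the rank upper bound $S_0(\Psi_A) \leq \log(D)\, m(A)$ for free, since any cut $S$ with $S\cap T = A$ bounds the rank of $\Psi_A$ by that of $\Omega_S$; the matching lower bound is not a deterministic rank identity but a probabilistic statement: one computes $\EE[\tr\Psi]$ and $\EE[\tr\Psi_A^2]$ using the 2-design moments and the swap trick, finds $\EE[\tr \Psi_A^2] \approx D_b^{-2} k_A D^{-m(A)}$ where $k_A$ counts minimal cuts, and then uses Chebyshev, Jensen, and Markov arguments to show that $S_2(\Psi_A)/\log(D) \to m(A)$ simultaneously for all $A$ with probability $1 - O(1/(\delta \log D))$ as $D\to\infty$. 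Your appeal to a ``stabilizer rank formula evaluated on the terminal subsystem'' has no counterpart for a fixed deterministic construction; the min-cut only emerges on average and in the large-$D$ limit, which is exactly the content of the paper's two lemmas and the reason the theorem needs the closure in the definition of $\Cstab n$.
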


\noindent
\cref{thm:main} was conjectured to hold in Ref.~\onlinecite{bao2020quantum}, where the authors explicitly verified the inclusion for~$n \leq 5$.
We note that for larger $n$ not even the inclusion $\Chyper n \subseteq \Cquantum n$ was known before our result.
Shortly after our result, follow-up work showed that $\Chyper 6 \neq \Cstab 6$.\cite{bao2020gap}
Together with \cref{thm:main} this implies that the inclusion~$\Chyper n \subseteq \Cstab n$ is strict for $n\geq 6$.

To prove \cref{thm:main}, we need to construct, for any weighted hypergraph, a quantum state whose entropies realize the min-cut function.
To achieve this, we use the hypergraph to define a tensor network structure, with multipartite entangled states on the hyperedges rather than maximally entangled states.
The min-cut function gives an upper bound on the entropy function of the state obtained by placing arbitrary tensors on the vertices in this network.
By choosing random (stabilizer) tensors we obtain an ensemble of (stabilizer) quantum states which satisfies a generalized version of the Ryu-Takayanagi formula; namely the upper bound is almost satisfied and hence the entropy function is approximated by the min-cut function on the hypergraph (with high probability).
By increasing the local dimension in the tensor network we can approximate the hypergraph min-cut function to arbitrary precision, which then proves the inclusion $\Chyper n \subseteq \Cstab n$.
In Refs.~\onlinecite{hayden2016holographic,nezami2016multipartite}, it was shown that $\Cgraph n \subseteq \Cstab n \subseteq \Cquantum n$ by considering random tensor network states.
Since by definition $\Cgraph n \subseteq \Chyper n$, \cref{thm:main} strengthens this result.
Our method of proof suggests that various other results of Ref.~\onlinecite{hayden2016holographic} can be extended to the setting of random tensor networks on hypergraphs, such as the possibility to include bulk states and to perform error correction on the entanglement wedge.
Another interesting extension would be to study the effect of the multipartite entanglement included in our hypergraph tensor network ansatz on the multipartite entanglement properties of the resulting quantum state, as was studied for graphs in Ref.~\onlinecite{nezami2016multipartite}.

\section{Random tensor network states for hypergraphs min-cuts}
Fix a hypergraph $G=(V,E)$ with integral hyperedge weights~$w\colon E \to \ZZ_{\geq0}$.
For each vertex $x\in V$, define the Hilbert space $\cH_x = \bigotimes_{e \in E : x \in e} \cH_{x,e}^{\ot w(e)}$, where $\cH_{x,e} = \CC^D$.
The dimension $D$ will later be taken to be large.
Now let $\ket\Omega$ be the state given by
\begin{align}\label{eq:omega}
    \ket\Omega = \bigotimes_{e\in E} \ket{\GHZ(e)}^{\ot w(e)} \in \bigotimes_{x\in V} \cH_x.
\end{align}
Here, $\ket{\GHZ(e)} = \frac1{\sqrt D} \sum_{i=1}^D \ket i^{\ot \abs e} \in \bigotimes_{x \in e} \cH_{x,e}$ denotes an $\abs e$-partite GHZ state of local dimension~$D$.
One can also develop the following theory for other multipartite entangled states than the GHZ states, but for our purposes this choice will suffice.

The GHZ state has the property that it is a pure state whose reduced states have~$D$ nonzero eigenvalues which are all equal to~$1/D$.
This implies that not only the von Neumann entropy of any subsystem $M \subseteq V$, but in fact any R\'enyi-$\alpha$ entropy (which we denote by~$S_\alpha$) can be calculated by
\begin{equation}\label{eq:cut entropies}
\begin{aligned}
    S_\alpha(\Omega_M) &= \sum_{e\in E} w(e) \, S_\alpha\bigl(\GHZ(e)_{M \cap e}\bigr)
= \log(D) \sum_{e\in \delta(M)} w(e)
= \log(D) \, c(M).
\end{aligned}
\end{equation}
We have thus found a family of quantum states, parameterized by $D$, whose entropy function is exactly proportional to the cut function~$c\colon 2^V\to\RR_{\geq 0}$ of the given hypergraph.

We now explain how to construct a quantum state, based on the one in~\eqref{eq:omega}, whose entropy function is related to the \emph{min}-cut function~$m\colon 2^T\to\RR_{\geq 0}$ of the hypergraph $G=(V,E)$ for any fixed choice of terminals~$T\subseteq V$.
We may assume that any connected component of $G$ touches $T$ (otherwise we can remove this component without impacting the min-cut function).
Our main tool is the following construction, which is implicit in Ref.~\onlinecite{hayden2016holographic} (cf.~Refs.~\onlinecite{horodecki2007quantum,dutil2010one}).
Define the (not necessarily normalized) pure state
\begin{align}\label{eq:projected state}
    \ket{\Psi} = \Bigl( \bigotimes_{x \in V \setminus T} \bra{\phi_x} \Bigr) \ket\Omega \in \cH = \bigotimes_{x\in T} \cH_x,
\end{align}
obtained by projecting the tensor factors for each non-terminal vertex~$x\in V\setminus T$ onto pure states~$\ket{\phi_x} \in \mathcal H_x$ (for now these states are arbitrary; we later choose them at random).
We note that~$\Psi$ can be understood as a tensor network state of bond dimension~$D$.

We can relate the entropies of~$\Psi$ to those of the state~$\Omega$.
For this, recall that for any quantum state~$\rho$,
\begin{align}\label{eq:renyi mono}
    S_2(\rho) \leq S(\rho) \leq S_0(\rho),
\end{align}
where $S_2(\rho) = -\log\tr[\rho^2]$ is the R\'enyi-2 entropy, $S(\rho) = -\tr[\rho\log\rho]$ the von Neumann entropy, and $S_0(\rho) = \log\rk[\rho]$ the log-rank.
If $\rho$ is not normalized then we define $S_\alpha(\rho)$ in terms of the normalization (this makes no difference for the log-rank).
Then it is not hard to see that the state~\eqref{eq:projected state} satisfies
\begin{align}\label{eq:rank upper}
    S_0(\Psi_A) \leq \min_{M : M \cap T = A} S_0(\Omega_M) = \log(D) m(A)
    \qquad
    \forall A \subseteq T.
\end{align}
Indeed, for any cut $M$ with $M \cap T = A$ we can upper bound the rank of $\Psi_A$ in terms of the rank of $\Omega_M$, which in turn can be evaluated using~\eqref{eq:cut entropies} for $\alpha=0$.

\section{Proof of Theorem~\ref{thm:main} using random tensor network states}
To prove our main result it suffices, in view of~\eqref{eq:renyi mono}, to complement the upper bound in~\eqref{eq:rank upper} by a similar lower bound on the R\'enyi-2 entropy.
For this, we choose each $\phi_x=\ket{\phi_x}\bra{\phi_x}$ independently at random from a \emph{projective 2-design}.
A 2-design is an ensemble of pure states such that the first two moments agree with the unitarily invariant (`Haar') probability measure on pure states, i.e.,
\begin{align}\label{eq:haar moments}
  \EE[\phi_x] = \frac I {D_x}
  \quad\text{and}\quad
  \EE[\phi_x^{\ot 2}] = \frac{I + F}{D_x(D_x+1)},
\end{align}
where~$D_x = \dim \cH_x$ and~$F$ denotes the swap operator on $\cH_x^{\ot 2}$.
In particular, we may choose each~$\phi_x$ uniformly at random from the finite set of stabilizer states,\cite{gottesman1997stabilizer} which is a well-known 2-design.\cite{klappenecker2005mutually,gross2007evenly}
In this case,~$\Psi$~is again a stabilizer state.\cite{hayden2016holographic}
We now compute the expected trace and purity of any subsystem.

\begin{lem}\label{lem:moments}
Let the $\phi_x$ be chosen independently at random from a 2-design.
Then the state~$\Psi$ in~\eqref{eq:projected state} satisfies
\begin{align*}
    \EE[\tr[\Psi]] = \frac1{D_b}
\end{align*}
and
\begin{align*}
    \EE[\tr[\Psi_A^2]] = \frac1{D_b^2} D^{-m(A)} \left( k_A + O(D^{-1}) \right)
\end{align*}
for all $A \subseteq T$, where $D_b = \prod_{x\in V\setminus T} D_x$ and $k_A$ denotes the number of minimal cuts for~$A$.
\end{lem}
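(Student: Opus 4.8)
The plan is to average each random projector using the two defining moments of the $2$-design, reducing both quantities to traces against $\Omega^{\ot 2}$ that then factorize over the edges of $G$. For the first moment this is immediate: I would write $\tr[\Psi] = \braket{\Psi|\Psi} = \bra{\Omega}\bigl(\bigotimes_{x\in V\setminus T}\phi_x\bigr)\ket{\Omega}$, substitute $\EE[\phi_x] = \id/D_x$, and use $\braket{\Omega|\Omega}=1$ to obtain $\EE[\tr[\Psi]] = D_b^{-1}$.

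For the purity I would first apply the swap trick $\tr[\Psi_A^2] = \tr[\Psi^{\ot 2}\,F_A]$, where $F_A = \bigotimes_{x\in A} F_x \ot \bigotimes_{x\in T\setminus A}\id_x$ swaps the two copies on the $A$-terminals (this identity is valid for the unnormalized $\Psi$). Since each $\phi_x$ appears in both copies, averaging with $\EE[\phi_x^{\ot 2}] = (\id_x + F_x)/(D_x(D_x+1))$ for the non-terminal vertices gives
\begin{align*}
  \EE[\tr[\Psi_A^2]] = \tr\Bigl[\Omega^{\ot 2}\bigotimes_{x\in V} M_x\Bigr],
\end{align*}
with $M_x = F_x$ for $x\in A$, $M_x = \id_x$ for $x\in T\setminus A$, and $M_x = (\id_x+F_x)/(D_x(D_x+1))$ for $x\notin T$. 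Expanding each non-terminal factor $\id_x + F_x$ turns this into a sum indexed by the subsets $S\subseteq V$ with $S\cap T = A$, i.e.\ precisely the cuts for $A$, where $S$ records which vertices are assigned the swap $F_x$.

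The key step is to evaluate $\tr[\Omega^{\ot 2}\bigotimes_x \sigma_x^S]$, where $\sigma_x^S = F_x$ for $x\in S$ and $\id_x$ otherwise. Since $\Omega = \bigotimes_e \GHZ(e)^{\ot w(e)}$ and every swap factorizes across the tensor factors $\cH_{x,e}$, this trace factorizes into a product over edges of single-GHZ overlaps. For a fixed edge $e$, the GHZ state ties all of its parties to a common ket/bra index pair in each copy, so the overlap reduces to counting four global indices subject to the equalities imposed by the $\sigma_x^S$: one checks these are compatible with $D^2$ assignments when $e$ is uncut by $S$ and collapse to a single common index ($D$ assignments) exactly when $e\in\delta(S)$. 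Dividing by the GHZ normalization $D^2$ gives an edge overlap of $1$ resp.\ $D^{-1}$, and hence, raising to the weight $w(e)$ and multiplying over edges, $\tr[\Omega^{\ot 2}\bigotimes_x \sigma_x^S] = D^{-c(S)}$. This identification of the exponent with the cut value is the crux of the argument.

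Assembling the pieces yields
\begin{align*}
  \EE[\tr[\Psi_A^2]] = \Bigl(\prod_{x\in V\setminus T}\frac{1}{D_x(D_x+1)}\Bigr)\sum_{S : S\cap T = A} D^{-c(S)}.
\end{align*}
The prefactor equals $D_b^{-2}(1 + \bigo(D^{-1}))$ because each $D_x\geq D$ and there are only finitely many non-terminal vertices. For the sum I would factor out $D^{-m(A)}$: the $k_A$ minimal cuts each contribute $1$, while integrality of the weights forces every non-minimal cut to satisfy $c(S)\geq m(A)+1$, so the remaining $\bigo(1)$ terms contribute $\bigo(D^{-1})$. This gives $\sum_S D^{-c(S)} = D^{-m(A)}(k_A + \bigo(D^{-1}))$ and hence the claim. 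I expect the edge-overlap computation to be the only genuinely delicate point; the asymptotics are routine once the weights are known to be integral.
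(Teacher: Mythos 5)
Your proof is correct and follows essentially the same route as the paper: the first moment from the 2-design average, the swap trick plus the second-moment formula to expand the purity into a sum over cuts $S$ with $S\cap T=A$, and the same asymptotic treatment of the prefactor and of the sum (minimal cuts giving $k_A$, integrality forcing the rest into the $O(D^{-1})$ term). The only cosmetic difference is that you evaluate $\tr[\Omega^{\ot 2}F_S]=D^{-c(S)}$ by direct index counting on the GHZ states, whereas the paper obtains the same identity from its already-established R\'enyi-entropy formula~\eqref{eq:cut entropies}; the content is identical.
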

\begin{proof}
From the formula for the first moment of a 2-design in~\eqref{eq:haar moments},
\begin{align*}
  \EE[\tr[\Psi]] = \tr\Bigl[\Omega \bigl( \bigotimes_{x \in V \setminus T} \EE[\phi_x] \bigr) \Bigr] = \frac 1 {D_b}
\end{align*}
(as is common we leave out tensor products with identity operators).
This establishes the first claim.

For the second claim, we first use the swap trick to compute
\begin{align*}
  \tr[\Psi_A^2]
&= \tr[\Psi^{\ot 2} F_A] \\
&= \tr\Bigl[\bigl( \bigotimes_{x \in V \setminus T} \bra{\phi_x}^{\ot 2} \bigr) \Omega^{\ot 2} \bigl( \bigotimes_{x \in V \setminus T} \ket{\phi_x}^{\ot 2} \bigr) F_A\Bigr]\\
&= \tr\Bigl[\Omega^{\ot 2} \bigl( \bigotimes_{x \in V \setminus T} \phi_x^{\ot 2} \bigr) F_A\Bigr].
\end{align*}
Here, $F_A$ denotes the product of the swap operators on~$\cH_x^{\ot2}$ for $x\in A$.
In the last step, we used the cyclicity of the trace and the fact that the swap operator $F_A$ `commutes' with $\bra{\phi_x}^{\ot 2}$ for~$x\in V\setminus T$.
We can then compute the average by using the formula for the second moment of a 2-design in~\eqref{eq:haar moments}:
\begin{align*}
  \EE[\tr[\Psi_A^2]]
&= \tr\Bigl[\Omega^{\ot 2} \bigl( \bigotimes_{x \in V \setminus T} \EE[\phi_x^{\ot 2}] \bigr) F_A\Bigr] \\
&= \tr\Bigl[\Omega^{\ot 2} \bigl( \prod_{x \in V \setminus T} \frac {I_x + F_x} {D_x(D_x+1)} \bigr) F_A\Bigr] \\
&= \prod_{x\in V\setminus T} \frac1{D_x(D_x+1)} \sum_{M : M \cap T = A} \tr\bigl[\Omega^{\ot 2} F_M \bigr]
\end{align*}
where the last step is again by the swap trick.
The prefactor is $D_b^{-2} (1 + O(D^{-1}))$, while the sum can be estimated using~\eqref{eq:cut entropies},
\begin{align*}
  \sum_{M : M \cap T = A} \tr\bigl[\Omega^{\ot 2} F_M \bigr]
&= \sum_{M : M \cap T = A} 2^{-S_2(\Omega_M)}
= \sum_{M : M \cap T = A} D^{-c(M)}
= D^{-m(A)} \left( k_A + O(D^{-1}) \right),
\end{align*}
where we recall that~$k_A$ denotes the number of minimal cuts for $A$.
Together we obtain the second claim.
\end{proof}

For $A=\emptyset$, the min-cut is empty and nondegenerate by our assumption that any connected component of~$G$ touches~$T$, so \cref{lem:moments} implies in particular that $\EE[\tr[\Psi]^2] = D_b^{-2} (1 + O(D^{-1}))$.
Thus, $\tr[\Psi]$ is concentrated around its mean, suggesting that, with high probability, $\Psi\neq0$ and $S_2(\Psi_A) / \! \log(D) \approx m(A)$ for large~$D$.
The following lemma follows the proof strategy of Ref.~\onlinecite{hayden2016holographic} to make this intuition precise.

\begin{lem}\label{lem:concentration}
Let~$\Psi$ be defined as in~\eqref{eq:projected state}, with each $\phi_x$ chosen independently at random from a 2-design. Then the following properties hold for large~$D$:
\begin{enumerate}[label={(\alph*)}]
\item\label{it:nonzero}
$\PP(\Psi\neq0) = 1 - O(D^{-1})$.
\item\label{it:expected renyi}
$\EE[S_2(\Psi_A) | \Psi\neq0] \geq \log(D) \, m(A) - \log(k_A) - O(D^{-1/4})$, with $k_A$ the number of minimal cuts for~$A$.
\item\label{it:high probability entropy}
For any~$\delta>0$ it holds that
\begin{align*}
  \PP\Bigl(\Psi\neq0 \text{ and } \abs[\Big]{ \frac{S_2(\Psi_A)}{\log(D)} - m(A) } \leq \delta \text{ for all } A \subseteq T\Bigr) = 1 - O\Bigl(\frac1{\delta\log(D)}\Bigr).
\end{align*}
The same statement holds for the von Neumann entropy~$S(\Psi_A)$ instead of the R\'enyi entropy~$S_2(\Psi_A)$.
\end{enumerate}
\end{lem}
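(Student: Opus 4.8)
The plan is to deduce all three parts from \cref{lem:moments} together with the deterministic bounds already recorded. The key structural point is that the \emph{upper} bound on the entropies carries no randomness: combining \eqref{eq:renyi mono} with \eqref{eq:rank upper}, on the event $\Psi\neq0$ we have $S_2(\Psi_A)\le S(\Psi_A)\le S_0(\Psi_A)\le\log(D)\,m(A)$ for every $A\subseteq T$. Thus only the matching \emph{lower} bounds require a probabilistic argument, and because $S_2\le S$, any lower bound on the R\'enyi-$2$ entropy immediately transfers to the von Neumann entropy; this is what makes the final sentence of the lemma free.

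For part~\ref{it:nonzero} I would apply Chebyshev's inequality to $Z:=\tr[\Psi]$. Taking $A=\emptyset$ in \cref{lem:moments} (where $m(\emptyset)=0$ and $k_\emptyset=1$ by the assumption that every component touches $T$) gives $\EE[Z]=D_b^{-1}$ and $\EE[Z^2]=D_b^{-2}(1+O(D^{-1}))$, so $\mathrm{Var}(Z)=D_b^{-2}\,O(D^{-1})$. Since $\{\Psi=0\}\subseteq\{\abs{Z-D_b^{-1}}\ge D_b^{-1}\}$, Chebyshev yields $\PP(\Psi=0)\le\mathrm{Var}(Z)/D_b^{-2}=O(D^{-1})$.

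Part~\ref{it:expected renyi} is the heart of the argument. Writing $S_2(\Psi_A)=2\log Z-\log\tr[\Psi_A^2]$, the obstacle is that $S_2$ depends on the \emph{ratio} $\tr[\Psi_A^2]/Z^2$, so one cannot apply Jensen's inequality directly to the expected unnormalized purity from \cref{lem:moments}. I would resolve this by conditioning on the event $G=\{Z\ge(1-D^{-1/4})D_b^{-1}\}$, whose complement has probability $O(D^{-1/2})$ by the same Chebyshev estimate; the threshold $D^{-1/4}$ is chosen to balance this probability against the multiplicative error it introduces. On $G$ one has $2\log Z\ge-2\log D_b-O(D^{-1/4})$ deterministically, while convexity of $-\log$ gives
\[
\EE[-\log\tr[\Psi_A^2]\mid G]\ge-\log\EE[\tr[\Psi_A^2]\mid G]\ge-\log\EE[\tr[\Psi_A^2]]+\log\PP(G),
\]
using $\tr[\Psi_A^2]\ge0$ in the last step. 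Substituting $\EE[\tr[\Psi_A^2]]$ from \cref{lem:moments} and $\log\PP(G)=-O(D^{-1/2})$ then yields $\EE[S_2(\Psi_A)\mid G]\ge\log(D)\,m(A)-\log k_A-O(D^{-1/4})$. Finally I would pass from $G$ to $\{\Psi\neq0\}$: since $S_2(\Psi_A)\ge0$ and $G\subseteq\{\Psi\neq0\}$, we have $\EE[S_2(\Psi_A)\,\mathbf 1_{\Psi\neq0}]\ge\PP(G)\,\EE[S_2(\Psi_A)\mid G]$, and dividing by $\PP(\Psi\neq0)\le1$ and absorbing the $(1-O(D^{-1/2}))$ factor (which costs only $O(D^{-1/2}\log D)=o(D^{-1/4})$) gives the stated bound.

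For part~\ref{it:high probability entropy} I would combine the deterministic upper bound with part~\ref{it:expected renyi} via Markov's inequality. On $\{\Psi\neq0\}$ the deficit $X_A:=\log(D)\,m(A)-S_2(\Psi_A)$ is nonnegative, and part~\ref{it:expected renyi} bounds its conditional mean by $\EE[X_A\mid\Psi\neq0]\le\log k_A+O(D^{-1/4})=O(1)$, since $k_A$ depends only on $G$ and $A$ and not on $D$. Markov then gives $\PP(X_A\ge\delta\log D\mid\Psi\neq0)\le O(1/(\delta\log D))$, that is, $S_2(\Psi_A)/\log(D)\ge m(A)-\delta$ fails with probability $O(1/(\delta\log D))$; a union bound over the $2^{\abs{T}}$ subsets $A$ (a constant number) preserves this order. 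Intersecting with the event from part~\ref{it:nonzero}, whose failure probability $O(D^{-1})$ is negligible against $1/(\delta\log D)$, and recalling that $S_2(\Psi_A)/\log(D)\le m(A)$ holds deterministically supplies the two-sided bound. The von Neumann version follows on the very same event from $S_2(\Psi_A)\le S(\Psi_A)\le\log(D)\,m(A)$.
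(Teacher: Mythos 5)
Your proposal is correct and follows essentially the same route as the paper's proof: Chebyshev's inequality on $\tr[\Psi]$ for part (a), conditioning on a good event where $D_b\tr[\Psi]$ is within $D^{-1/4}$ of $1$ combined with Jensen's inequality and the bound $\PP(G)\,\EE[\,\cdot\mid G]\le\EE[\,\cdot\,]$ for part (b), and Markov's inequality on the nonnegative deficit $\log(D)\,m(A)-S_2(\Psi_A)$ plus a union bound, with the deterministic sandwich $S_2(\Psi_A)\le S(\Psi_A)\le\log(D)\,m(A)$ giving the two-sided and von Neumann statements, for part (c). The only differences (your good event is one-sided rather than two-sided, and you make the deterministic nature of the upper bound explicit up front) are immaterial.
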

\begin{proof}
As just noted, $\EE[\tr[\Psi]^2] = \EE[\tr[\Psi]]^2 (1 + O(D^{-1}))$, so Chebyshev's inequality shows that, for any~$\eps>0$,
\begin{align}\label{eq:cheby}
  \PP\bigl(\lvert D_b\tr[\Psi] - 1 \rvert \leq \eps \bigr)
\geq 1 - O\Bigl(\frac1{\eps^2 D}\Bigr).
\end{align}
This establishes~\ref{it:nonzero}.

Next we prove~\ref{it:expected renyi}.
Let $E$ denote the event that $\lvert D_b\tr[\Psi] - 1 \rvert \leq D^{-1/4}$.
By~\eqref{eq:cheby},
\begin{align}\label{eq:p_E}
    p_E := \PP(E) = 1 - O(D^{-1/2}).
\end{align}
Since~$E$ implies that~$\Psi\neq0$, we have
\begin{equation}\label{eq:E[S_2|nonzero]}
\begin{aligned}
  \EE[S_2(\Psi_A) | \Psi\neq0]
&\geq \PP(\Psi\neq0) \, \EE[S_2(\Psi_A) | \Psi\neq0] \\
&\geq p_E \, \EE[S_2(\Psi_A) | E]
\end{aligned}
\end{equation}
We now bound
\begin{equation}\label{eq:E[S_2|E]}
\begin{aligned}
  \EE[S_2(\Psi_A) | E]
&= -\EE\bigl[\log(D_b^2 \tr[\Psi_A^2]) \big| E \bigr] + 2 \EE\bigl[ \log(D_b\tr[\Psi]) \big| E \bigr] \\
&\geq -\log\EE\bigl[D_b^2 \tr[\Psi_A^2] \big| E \bigr] + 2 \log\bigl(1-D^{-1/4}\bigr) \\
&= -\log\EE\bigl[D_b^2 \tr[\Psi_A^2] \big| E \bigr] - O(D^{-1/4})
\end{aligned}
\end{equation}
where we used Jensen's inequality to lower-bound the first term. 
Using~$p_E \, \EE[\tr[\Psi_A^2] | E] \leq \EE[\tr[\Psi_A^2]]$, we obtain
\begin{align*}
  -\log\EE\bigl[D_b^2 \tr[\Psi_A^2] \big| E \bigr]
&\geq -\log \EE[D_b^2 \tr[\Psi_A^2]] + \log(p_E) \\
&= \log(D) \, m(A) - \log \bigl( k_A + O(D^{-1}) \bigr) + \log\bigl( 1 - O(D^{-1/2}) \bigr) \\
&= \log(D) \, m(A) - \log(k_A) - O(D^{-1/2})
\end{align*}
by \cref{lem:moments} and~\eqref{eq:p_E}.
Together with~\eqref{eq:E[S_2|nonzero]}, \eqref{eq:E[S_2|E]}, and~$S_2(\Psi_A) \leq \log(D)m(A)$, which holds by~\eqref{eq:renyi mono} and \eqref{eq:rank upper}, we find
\begin{align*}
  \EE[S_2(\Psi_A) | \Psi\neq0]
&\geq \log(D) \, m(A) - \log(k_A) - O(D^{-1/4}),
\end{align*}
proving~\ref{it:expected renyi}.

To prove \ref{it:high probability entropy}, we note that $\log(D)m(A) - S_2(\Psi_A)$ is a nonnegative random variable. Thus, for any fixed~$A\subset T$
\begin{align*}
\PP\Bigl(m(A) - \frac{S_2(\Psi_A)}{\log(D)} > \delta \Big| \Psi\neq0 \Bigr)
&\leq \frac{\log(D) m(A) - \EE[S_2(\Psi_A) | \Psi\neq0 ]}{\delta \log(D)} \\
&\leq \frac{\log(k_A) + O(D^{-1/4})}{\delta \log(D)}\\
&= O\left(\frac1{\delta\log(D)}\right)
\end{align*}
where we first used the Markov inequality and then part~\ref{it:expected renyi}.
By taking the union bound over all the finitely many subsets~$A \subseteq T$ and using part~\ref{it:nonzero}, we obtain that
\begin{align*}
  \PP\Bigl(\Psi\neq0 \text{ and } m(A) - \frac{S_2(\Psi_A)}{\log(D)} \leq \delta \text{ for all } A \subseteq T\Bigr) = 1 - O\Bigl(\frac1{\delta\log(D)}\Bigr).
\end{align*}
In view of $S_2(\Psi_A) \leq S(\Psi_A) \leq \log(D)m(A)$, this proves part~\ref{it:high probability entropy}.
\end{proof}

We finally prove our main result, which now follows readily from \Cref{lem:concentration}.

\begin{proof}[Proof of \cref{thm:main}]
We only need to show that $\Chyper n \subseteq \Cstab n$.
Since $\Cstab n$ is a closed cone, it suffices to show that for any $\delta>0$ and any hypergraph with integral hyperedge weights, terminal set~$T=[n]$, and min-cut function~$m$, there exists a number~$c>0$ and a stabilizer state $\Psi$ on an $n$-partite Hilbert space such that
\begin{align*}
  \abs[\Big]{ \frac{S(\Psi_A)}{c} - m(A) } \leq \delta
\end{align*}
for all~$A \subseteq T$.
This follows from \cref{lem:concentration} if we use the set of stabilizer states as the 2-design and choose~$D$ to be sufficiently large.
Indeed, if each~$\phi_x$ is a stabilizer state then so is $\Psi$, as discussed above.
\end{proof}

\subsection*{Acknowledgements}
We would like to thank Ning Bao and Newton Cheng for interesting correspondence on related topics.
MW is supported by NWO grants Veni~680-47-459 and OCENW.KLEIN.267.

\subsection*{Data availability}
Data sharing is not applicable to this article as no new data were created or analyzed in this study.

\bibliography{hypergraphs}

\end{document}